\newcommand{\bx}{\mathbf{x}}
\newcommand{\bk}{\mathbf{k}}
\newcommand{\eps}{\varepsilon}
\newcommand{\omq}{\omega^{(\nu)}_{\mathbf{q}}}
\newcommand{\homq}{\hbar \, \omega^{(\nu)}_{\mathbf{q}}}
\newcommand{\nqn}{n^{(\nu)}_{\mathbf{q}}}
\newcommand{\setk}{\mathbb{R}^{2}}
\newcommand{\fdk}{{\mathbb{F}}(\bk, \mu)}
\newcommand{\fdkp}{{\mathbb{F}}(\bk', \mu)}
\newcommand{\sv}{\, ,}
\newcommand{\p}{\, .}
\newcommand{\dm}{\displaystyle}
\newcommand{\perogni}{\forall \,}
\newcommand{\eq}[1]{(\ref{#1})}
\newcommand{\fr}{\\[5pt]}
\newtheorem{theorem}{Theorem}
\newtheorem{lemma}{Lemma}
\newenvironment{proof}{{\bf Proof.}}{$\Box$}
\author{Armando Majorana
	\fr Department of Mathematics and Computer Science, \\
	University of Catania, Italy}
\title{}
\title{A BGK model for charge transport in graphene}
\begin{document}
\baselineskip=15pt
\maketitle
\begin{abstract}
The Boltzmann equation describes the detailed microscopic behaviour of a dilute gas, and represents the basis of the kinetic theory of gases. 
In order to reduce the difficulties in solving the Boltzmann equation, simple expressions of a collision operator have been proposed to replace the true Boltzmann integral term. These new equations are called kinetic models.
The most popular and widely used kinetic model is the Bhatnagar-Gross-Krook (BGK) model.
\\
In this work we propose and analyse a BGK model for charge transport in graphene.
\end{abstract}
MSC-class: 82C40 - 82C70 (Primary) 82D37 (Secondary)
%
%



\section{Introduction}

Graphene is a gapless semiconductor made of a sheet composed of a single layer of carbon 
atoms  arranged into a honeycomb hexagonal lattice \cite{CaNe}. In view of application in 
graphene-based electron devices, it is crucial to understand the basic transport properties of this material.  
In a semiclassical kinetic setting, the charge transport in graphene is described by four 
Boltzmann equations, one for electrons in the valence ($\pi$) band and one for electrons in the conductions ($\pi^*$) band, that in turn can belong to the $K$ or $K'$ valley.
In this paper we study the case of a single distribution function $f$ for electrons belonging to a conduction band. This corresponds to a physical case, where a n-type doping or equivalently a high value of the Fermi potential is considered, and the electrons, belonging to a conduction band, do not move to the valence band. Moreover $K$ and $K'$ are considered equivalent.
Under these assumptions Boltzmann equation writes
\begin{equation}
\frac{\partial f }{\partial t} + 
\dfrac{1}{\hbar} \, \nabla_{\mathbf{k}} \, \varepsilon 
\cdot \nabla_{{\mathbf{x}}} f  
- \frac{e}{\hbar} \, {\bf E} \cdot \nabla_{\mathbf{k}} f
= \left( \dfrac{df}{dt} \right)_{coll} ,
\label{BTE}
\end{equation}
where $\mathrm{t}$ is the time, $\mathbf{x}$ and $\mathbf{k} \in \setk$ are the position and the wave-vector of a charge particle, respectively.
We denote by $\nabla_{{\mathbf{x}}}$ and $\nabla_{\mathbf{k}}$ the gradients with respect to the position and wave-vector, respectively.
With a very good approximation \cite{CaNe} the energy bands is given by
$\varepsilon(\mathbf{k}) = \hbar \, v_F \left| \mathbf{k} \right| $,
where $v_F$ is the (constant) Fermi velocity, and $\hbar$ is the Planck constant divided by 
$2 \, \pi$.
The elementary (positive) charge is denoted by $e$, and ${\bf E}$ is the electric field. 
The collision operator of Equation~\eq{BTE} describes the interaction of electrons with acoustic, optical and $K$ phonons. \\
If the electric field ${\bf E}$ is not constant, then it must be self-consistently evaluated by coupling the Boltzmann equation \eq{BTE} with the Poisson equation for the electrostatic potential.

More recently accurate numerical solutions, based on the discontinuous Galerkin method, to the Boltzmann equation have been shown \cite{MajMaRo}, \cite{RoMajCo}. A comparison with 
results, obtained by using a direct simulation Monte Carlo approach, shows an excellent 
agreement, which gives a further validation of the numerical scheme. 

In this work the collision operator of Equation~(\ref{BTE}) is replaced by a relaxation (BGK) collision operator. We propose an operator, which retains the fundamental properties of the Boltzmann equation, such the mass conservation, the same equilibrium distribution functions and it properly deals with Pauli's exclusion principle in the degenerate case.
%
%
\section{The BGK model}
The collision operator of Equation~\eq{BTE} writes \cite{Toma}
\begin{equation}
\int_{\setk} S(\bk', \bk) \, f(t,\bx,\bk') \left[ 1 - f(t,\bx,\bk) \right] d \bk' -
\int_{\setk} S(\bk, \bk') \, f(t,\bx,\bk) \left[ 1 - f(t,\bx,\bk') \right] d \bk' \sv
\label{oper_coll}
\end{equation}
where the transition rate $S(\bk, \bk')$, related to electron-phonon scatterings, 
is described in detail in the following.
\\
The collision operator \eq{oper_coll} vanishes if $f$ is the Fermi-Dirac distribution
$$
\fdk = 
\dfrac{1}{1 + \exp \left( \dfrac{\eps(\bk) - \mu}{k_{B} \, T} \right)} ,
$$
where $k_B$ is the Boltzmann constant, $T$ the constant graphene lattice temperature, and $\mu$ is the chemical potential.
If the electric field $\bf E$ is null, then Fermi-Dirac distributions are solutions, which do not depend on time $t$ and space coordinates $\bx$, of the Boltzmann equation \eq{BTE}.
\\
In this paper we propose the following BGK collision operator
\begin{equation}
Q(f)(t,\bx,\bk) = 
\int_{\setk} S(\bk', \bk) \, \fdkp \left[ 1 - f(t,\bx,\bk) \right] d \bk' -
\int_{\setk} S(\bk, \bk') \, f(t,\bx,\bk) \left[ 1 - \fdkp \right] d \bk'   \p
\label{oper_bgk}
\end{equation}
It is derived from Equation \eq{oper_coll}, replacing the distribution $f(t,\bx,\bk')$, inside the integrals, with a Fermi-Dirac distribution.
Now the chemical potential $\mu$ becomes \emph{a new unknown}. 
Therefore we must add an equation to the kinetic model. It is
\begin{equation}
\int_{\setk} Q(f)(t,\bx,\bk) \: d \bk = 0 \p
\end{equation}
This equation guarantees the mass conservation, exactly as the same for the Boltzmann equation. 
\\
If we define
\begin{equation}
\Phi_{0}(\bk,\mu) = 
\int_{\setk} S(\bk', \bk) \, \fdkp \, d \bk'
\quad \mbox{ and } \quad
\Phi_{1}(\bk,\mu) = \int_{\setk} S(\bk, \bk') \left[ 1 - \fdkp \right] d \bk' \sv
\end{equation}
then \eq{oper_bgk} becomes
\begin{equation}
Q(f)(t,\bx,\bk) = 
\Phi_{0}(\bk,\mu) \left[ 1 - f(t,\bx,\bk) \right] - \Phi_{1}(\bk,\mu) f(t,\bx,\bk) \p
\end{equation}
It is evident that, for every $\bk$ and $\mu$,
\begin{eqnarray*}
&&
\Phi_{0}(\bk,\mu) \left[ 1 - \fdk \right] - \Phi_{1}(\bk,\mu) \fdk 
\\[3pt]
&&
= \int_{\setk} S(\bk', \bk) \, \fdkp \left[ 1 - \fdk \right] d \bk'   -
\int_{\setk} S(\bk, \bk') \, \fdk \left[ 1 - \fdkp \right] d \bk'
= 0 \p
\end{eqnarray*}
Using this identity, we have
\begin{eqnarray}
Q(f)(t,\bx,\bk) & = &\Phi_{0}(\bk,\mu) \left[ 1 - f(t,\bx,\bk) \right] - 
\Phi_{0}(\bk,\mu) \, \dfrac{1 - \fdk}{\fdk} f(t,\bx,\bk)
\nonumber
\\
& = &
\dfrac{\Phi_{0}(\bk,\mu)}{\fdk} \left\{
\fdk \left[ 1 - f(t,\bx,\bk) \right] -
\left[ 1 - \fdk \right] f(t,\bx,\bk) \right\}
\nonumber
\\
& = &
\dfrac{\Phi_{0}(\bk,\mu)}{\fdk} 
\left[ \fdk - f(t,\bx,\bk) \right] .
\end{eqnarray}
By defining the collision frequency of the BGK model
\begin{equation}
\kappa(\bk,\mu) = \dfrac{\Phi_{0}(\bk,\mu)}{\fdk} =
\Phi_{0}(\bk,\mu) 
\left[ 1 + \exp \left( \dfrac{\eps(\bk) - \mu}{k_{B} \, T} \right) \right] ,
\label{cfBGK}
\end{equation}
the kinetic model writes
\begin{eqnarray}
&&
\frac{\partial f }{\partial t}(t,\bx,\bk) + 
\dfrac{1}{\hbar} \, \nabla_{\mathbf{k}} \, \varepsilon 
\cdot \nabla_{{\mathbf{x}}} f(t,\bx,\bk)  
- \frac{e}{\hbar} \, {\bf E} \cdot \nabla_{\mathbf{k}} f(t,\bx,\bk)
= \kappa(\bk,\mu) \left[ \fdk - f(t,\bx,\bk) \right] ,
\label{BGKeq}
\\[5pt]
&&
\int_{\setk} \kappa(\bk,\mu) \left[ \fdk - f(t,\bx,\bk) \right] d \bk = 0 \p
\label{mass_cons}
\end{eqnarray}
A solution of system \eq{BGKeq}-\eq{mass_cons} consists of the two functions $f(t,\bx,\bk)$ and $\mu(t,\bx)$, which must satisfy suitable regularity conditions.
The common definitions of the kernel $S$ guarantees that the function $\kappa(\bk,\mu)$ is a no negative continuous function in the set $\setk \times \mathbb{R}$.
\subsection{The Pauli exclusion principle}
The distribution function $f$ must be not negative and less or equal to one, according to the Pauli's exclusion principle. We prove that, if the electric field is null, then every spatial homogeneous solution of Equations~\eq{BGKeq}-\eq{mass_cons} satisfies these conditions.
\begin{theorem}
Let be $f(t,\bk)$ a function defined in $[0, + \infty[ \times \setk$, differentiable with respect to $t$ for every $\bk \in \setk$, and integrable, with respect to $\bk$, over the domain $\setk$ for each time $t$. Moreover the function $\mu(t)$ is continuous in 
$[0, + \infty[$. 
If $f$ and $\mu$ satisfy the equations
$$
\frac{\partial f }{\partial t}(t,\bk)
= \kappa(\bk,\mu) \left[ \fdk - f(t,\bk) \right]
\quad \mbox{and} \quad
\int_{\setk} \kappa(\bk,\mu) \left[ \fdk - f(t,\bk) \right] d \bk = 0 \sv
$$
and $0 \leq f(0,\bk) \leq 1$, for all $\bk \in \setk$,
then $0 \leq f(t,\bk) \leq 1$ for $(t, \bk)  \in [0, + \infty[ \times \setk$.
\end{theorem}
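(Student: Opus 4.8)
The plan is to observe that, for each fixed $\bk \in \setk$, the first equation is simply a linear scalar ODE in the time variable, and to exploit the non-negativity of $\kappa$ together with the elementary bound $0 \le \fdk \le 1$ coming from the Fermi--Dirac form. The integral (mass-conservation) constraint plays essentially no role in the pointwise estimate: it only serves to fix the unknown $\mu(t)$, and once we are granted $\mu(t)$ continuous the argument proceeds for each $\bk$ independently.

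First I would fix $\bk$ and set $\kappa(t) := \kappa(\bk,\mu(t))$ and $F(t) := \fdk$, viewed as functions of $t$ through $\mu(t)$; both are continuous on $[0,+\infty[$, with $\kappa(t) \ge 0$ and $0 \le F(t) \le 1$. The equation reads $\partial_t f = \kappa(t)\,[F(t) - f]$, i.e. $\partial_t f + \kappa(t)\, f = \kappa(t)\, F(t)$. Using $K(t) := \int_0^t \kappa(r)\,dr \ge 0$ as integrating factor, the unique solution with datum $f(0,\bk)$ is
$$
f(t,\bk) = e^{-K(t)}\, f(0,\bk) + \int_0^t e^{-(K(t)-K(s))}\,\kappa(s)\,F(s)\,ds \p
$$

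The lower bound is then immediate: $f(0,\bk) \ge 0$, the exponentials are positive, and $\kappa(s)\,F(s) \ge 0$, so every term on the right is non-negative and $f(t,\bk) \ge 0$. For the upper bound I would use the identity $\int_0^t e^{-(K(t)-K(s))}\,\kappa(s)\,ds = 1 - e^{-K(t)}$, which holds because the integrand equals $\tfrac{d}{ds}\,e^{-(K(t)-K(s))}$. Replacing $F(s)$ and $f(0,\bk)$ by their common majorant $1$ gives $f(t,\bk) \le e^{-K(t)} + (1 - e^{-K(t)}) = 1$; in other words $f(t,\bk)$ is a convex combination of $f(0,\bk)$ and the values $F(s)$, all lying in $[0,1]$, hence it lies in $[0,1]$ as well.

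There is no genuine obstacle here; the only point requiring a little care is the justification that the coefficients are continuous in $t$, so that the integrating-factor representation is legitimate and the solution unique. This follows from the continuity of $\mu$ together with the continuity of $\kappa(\bk,\cdot)$ and $\fdk$ in $\mu$. As an alternative to the explicit solution one may argue by a barrier/maximum-principle: at any time where $f = 0$ one has $\partial_t f = \kappa\, F \ge 0$, and at any time where $f = 1$ one has $\partial_t f = \kappa\,(F - 1) \le 0$, so the interval $[0,1]$ is invariant; equivalently, the upper bound follows by applying the non-negativity argument to $1 - f$, which solves the same type of equation with $F$ replaced by $1 - F \ge 0$.
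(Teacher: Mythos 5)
Your proposal is correct and follows essentially the same route as the paper: an integrating factor $e^{K(t)}$ built from an antiderivative of $\kappa(\bk,\mu(t))$, non-negativity of $\kappa$ and of the Fermi--Dirac term for the lower bound, and (in your closing alternative) the substitution $g = 1 - f$ for the upper bound, which is exactly the paper's argument. Your primary upper-bound variant via the explicit Duhamel formula and the convex-combination identity $\int_0^t e^{-(K(t)-K(s))}\kappa(s)\,ds = 1 - e^{-K(t)}$ is a harmless cosmetic difference, and your observation that the mass-conservation constraint is not needed for the pointwise bounds matches the paper.
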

\begin{proof}
Let be
$$
K(\bk,\mu) \quad \mbox{such that} \quad
\dfrac{\partial K}{\partial t} = \kappa(\bk,\mu) \quad \perogni t \geq 0 \p
$$
Therefore we have
$$
\exp(K(\bk,\mu)) \, \dfrac{\partial f}{\partial t} = 
\exp(K(\bk,\mu)) \, \kappa(\bk,\mu) \, \fdk - 
\exp(K(\bk,\mu)) \, \kappa(\bk,\mu) \, f(t, \bk) 
$$
that is
$$
\dfrac{\partial \mbox{ }}{\partial t} \left[ \exp(K(\bk,\mu)) \, f(t, \bk) \right] =
\exp(K(\bk,\mu)) \, \kappa(\bk,\mu) \, \fdk \geq 0
\p
$$
and then
$$
\exp(K(\bk,\mu)) \, f(t, \bk) \geq 
\left[ \dfrac{}{} \! \exp(K(\bk,\mu)) \, f(t, \bk) \, \right|_{t=0} \p
$$
Hence the distribution function $f(t, \bk)$ is always not negative, because 
$f(0, \bk) \geq 0$ for every $\bk \in \setk$. \\
Since $f(0, \bk) \leq 1$ for every $\bk \in \setk$, we define
$g(t, \bk) = 1 - f(t, \bk)$, and we note that $0 \le g(0,\bk) \leq 1$.
It is evident that $g$ satisfies the equation
$$
\dfrac{\partial g}{\partial t} = \kappa(\bk,\mu) 
\left[ \left( 1 - \fdk \right) - g \right]
$$
and then, as before, taking into account that $\fdk < 1$, from
$$
\exp(K(\bk,\mu)) \, g(t, \bk) \geq 
\left[ \dfrac{}{} \! \exp(K(\bk,\mu)) \, g(t, \bk) \, \right|_{t=0} \sv
$$
it follows 
$g(t, \bk) \geq 0 \Leftrightarrow f(t, \bk) \leq 1$.
\end{proof}
\subsection{The kernel of the Boltzmann collision operator}
The kernel $S(\bk, \bk')$ is given by the sum of terms of the kind
\begin{equation}
\left| G^{(\nu)}(\bk, \bk') \right|^{2}
\left[ \left( \nqn + 1 \right) 
\delta \left( \eps(\bk') - \eps(\bk) + \homq \right) 
+ \nqn 
\delta \left( \eps(\bk') - \eps(\bk) - \homq \right) 
\right] .
\label{Gkk}
\end{equation}
The index $\nu$ labels the $\nu$th phonon mode, $G^{(\nu)}(\bk', \bk)$ is the scattering rate, which describes the scattering mechanism between phonons $\nu$ and electrons.
The symbol $\delta$ denotes the Dirac distribution function, $\omq$ the $\nu$th phonon frequency, and $\nqn$ is the Bose-Einstein distribution for the phonon of type $\nu$
$$
\nqn = \dfrac{1}{e^{\homq /k_B T} - 1} \p
$$
For acoustic phonons, usually one considers the elastic approximation, and
$$
2 \, n^{(ac)}_{\mathbf{q}}
\left| G^{(ac)}(\bk', \bk) \right|^{2} = \dfrac{1}{(2 \, \pi)^{2}} \, 
\dfrac{\pi \, D_{ac}^{2} \, k_{B} \, T}{2 \hbar \, \sigma_m \, v_{p}^{2}}
\left( 1 + \cos \vartheta_{\bk \sv \bk'} \right) ,
$$
where $D_{ac}$ is the acoustic phonon coupling constant, $v_{p}$ is the sound speed in graphene, $\sigma_m$ the graphene areal density, and $\vartheta_{\bk \sv \bk'}$ is the convex angle between $\bk$ and ${\bk'}$.   
\\
There are three relevant optical phonon scatterings: the longitudinal optical (LO), the transversal optical (TO) and the ${K}$ (K) phonons.
The scattering rates are
\begin{eqnarray*}
\left| G^{(LO)}(\bk', \bk) \right|^{2} & = & 
\dfrac{1}{(2 \, \pi)^{2}} \, \dfrac{\pi \, D_{O}^{2}}{\sigma_m \, \omega_{O}}
\left( 1 - \cos ( \vartheta_{\bk \sv \bk' - \bk} + \vartheta_{\bk' \sv \bk' - \bk} ) \right)
\\
\left| G^{(TO)}(\bk', \bk) \right|^{2} & = & 
\dfrac{1}{(2 \, \pi)^{2}} \, \dfrac{\pi \, D_{O}^{2}}{ \sigma_m \, \omega_{O}}
\left( 1 + \cos ( \vartheta_{\bk \sv \bk' - \bk} + \vartheta_{\bk' \sv \bk' - \bk} ) \right)
\\
\left| G^{(K)}(\bk', \bk) \right|^{2} & = & 
\dfrac{1}{(2 \, \pi)^{2}} \, \dfrac{2 \pi \, D_{K}^{2}}{\sigma_m \, \omega_{K}}
\left( 1 - \cos \vartheta_{\bk \sv \bk'} \right) ,
\end{eqnarray*}
where $D_{O}$ is the optical phonon coupling constant, $\omega_{O}$ the optical phonon frequency, $D_{K}$ is the K-phonon coupling constant and $\omega_{K}$ the K-phonon frequency.
The angles $\vartheta_{\bk \sv \bk' - \bk}$ and $\vartheta_{\bk' \sv \bk' - \bk}$ denote 
the convex angles between $\bk$ and $\bk' - \bk$  and between $\bk'$ and  $\bk' - \bk$, 
respectively.
\\
We used the same physical parameters of the Table 1 of Ref.~\cite{RoMajCo}. 
\\
Since the phonon frequency of longitudinal and transversal optical phonons coincide, then, as we sum the corresponding terms \eq{Gkk}, the function 
$\cos ( \vartheta_{\bk \sv \bk' - \bk} + \vartheta_{\bk' \sv \bk' - \bk} )$ can be eliminated, easily.
Therefore, in this case, $S$ is a sum of terms \eq{Gkk} where now
$ \left| G^{(\nu)}(\bk, \bk') \right|^{2} =
{\cal G}^{(\nu)}(\cos \vartheta_{\bk \sv \bk'}) $.
\subsection{The collision operator of the BGK model}
Taking into account Equation~\eq{BGKeq} and the definition \eq{cfBGK}, the collision operator of the BGK model writes
\begin{equation}
\Phi_{0}(\bk,\mu) \left\{ 1 - f(t,\bx,\bk) 
\left[ 1 + \exp \left( \dfrac{\eps(\bk) - \mu}{k_{B} \, T} \right) \right]
\right\} .
\label{oper2_bgk}
\end{equation}
This expression can be simplified. 
To this aim, we consider the function $\Phi_{0}(\bk,\mu)$, which is given by the sum of the following integrals
\begin{align*}
&
\int_{\setk} 
{\cal G}^{(\nu)}(\cos \vartheta_{\bk \sv \bk'}) 
\left[ \left( \nqn + 1 \right) 
\delta \left( \eps(\bk') - \eps(\bk) - \homq \right) 
+ \nqn
\delta \left( \eps(\bk') - \eps(\bk) + \homq \right) 
 \right] \fdkp \: d \bk'
\\
&
\mbox{ } =
\dfrac{ \nqn + 1 }
{1 + \exp \left( \dfrac{\eps(\bk) + \homq - \mu}{k_{B} \, T} \right)}
\int_{\setk} 
{\cal G}^{(\nu)}(\cos \vartheta_{\bk \sv \bk'}) \,
\delta \left( \eps(\bk') - \eps(\bk) - \homq \right) d \bk'
\\
&
\mbox{} \quad +
\dfrac{ \nqn }
{1 + \exp \left( 
\dfrac{ \left[ \eps(\bk) - \homq \right]_{+} - \mu}{k_{B} \, T} 
\right)}
\int_{\setk} 
{\cal G}^{(\nu)}(\cos \vartheta_{\bk \sv \bk'}) \,
\delta \left( \eps(\bk') - \eps(\bk) + \homq \right) d \bk'
\sv
\end{align*}
where $[z]_{+} = \max \{ z, 0 \}$.
\\
Introducing polar coordinates
$ \bk = r ( \cos \vartheta, \sin \vartheta ) $ and
$ \bk' = r' ( \cos \vartheta', \sin \vartheta' ) $,
we have
\begin{align*}
&
\int_{\setk} 
{\cal G}^{(\nu)}(\cos \vartheta_{\bk \sv \bk'}) \,
\delta \left( \eps(\bk') - \eps(\bk) - \homq \right) d \bk'
\\
&
\mbox{ } =
\int_{0}^{+ \infty} d r' \int_{0}^{2 \, \pi} d \vartheta' \,
{\cal G}^{(\nu)}(\cos (\vartheta' - \vartheta) ) \,
\delta \left( \hbar \, v_F \, r' - \hbar \, v_F \, r - \homq \right) r'
\\
&
\mbox{} =
\left[ \int_{0}^{2 \, \pi} {\cal G}^{(\nu)}(\cos (\vartheta' - \vartheta) ) \:  
d \vartheta' \, \right]
\dfrac{1}{\hbar \, v_F} \int_{\mathbb{R}}
\delta \left( r' - r - \dfrac{\omq}{v_F} \right) r' \, H(r') \: d r'
\\
&
\mbox{} =
\left[ \int_{0}^{2 \, \pi} {\cal G}^{(\nu)}(\cos \vartheta' ) \:  d \vartheta' \right]
\dfrac{1}{\hbar \, v_F} 
\left( r + \dfrac{\omq}{v_F}  \right) 
H \left( r + \dfrac{\omq}{v_F}  \right)
=
\left[ \int_{0}^{2 \, \pi} {\cal G}^{(\nu)}(\cos \vartheta' ) \:  d \vartheta' \right]
\dfrac{\eps(\bk) + \homq}{(\hbar \, v_F)^{2}} \sv
\end{align*}
where $H$ is the Heaviside function.
Analogously
$$ 
\int_{\setk} 
{\cal G}^{(\nu)}(\cos \vartheta_{\bk \sv \bk'}) \,
\delta \left( \eps(\bk') - \eps(\bk) + \homq \right) d \bk'
=
\left[ \int_{0}^{2 \, \pi} {\cal G}^{(\nu)}(\cos \vartheta' ) \:  d \vartheta' \, \right]
\dfrac{\left[ \eps(\bk) - \homq \right]_{+}}{(\hbar \, v_F)^{2}} \p
$$
Hence
\begin{eqnarray*}
\Phi_{0}(\bk,\mu) & = & \sum_{\nu} \dfrac{1}{(\hbar \, v_F)^{2}}
\left[ \int_{0}^{2 \, \pi} {\cal G}^{(\nu)}(\cos \vartheta' ) \:  d \vartheta' \right]
\\
&&
\mbox{} \quad \times \left[
\dfrac{ \left( \nqn + 1 \right) 
\left( \eps(\bk) + \homq \right)}
{1 + \exp \left( \dfrac{\eps(\bk) + \homq - \mu}{k_{B} \, T} \right)}
+ 
\dfrac{ \nqn \left[ \eps(\bk) - \homq \right]_{+} }
{1 + \exp \left( \dfrac{ \left[ \eps(\bk) - \homq \right]_{+} - \mu}{k_{B} \, T} \right)} \right] .
\end{eqnarray*}
In order to simplify the notation, we define
\begin{equation}
\psi(\bk, \xi; a,b) =
\dfrac{ \left( a + 1 \right) \left( \eps(\bk) + b \right)}
{1 + \xi \exp \left( \dfrac{\eps(\bk) + b}{k_{B} \, T} \right)}
+ 
\dfrac{ a \left[ \eps(\bk) - b \right]_{+} }
{1 + \xi \exp \left( \dfrac{ \left[ \eps(\bk) - b \right]_{+} }{k_{B} \, T} \right)}.
\label{def_psi}
\end{equation}
So
\begin{equation}
\Phi_{0}(\bk,\mu) = \sum_{\nu} \dfrac{1}{(\hbar \, v_F)^{2}}
\left[ \int_{0}^{2 \, \pi} {\cal G}^{(\nu)}(\cos \vartheta' ) \:  d \vartheta' \right]
\psi \! \left( 
\bk, \exp \left( \dfrac{ - \, \mu}{k_{B} \, T} \right); \nqn, \homq \right) .
\end{equation}
Then the BGK collision operator \eq{oper2_bgk} is a linear combination, with constant positive coefficients, of the terms
\begin{equation}
\psi \! \left( 
\bk, \exp \left( \dfrac{ - \, \mu}{k_{B} \, T} \right); \nqn, \homq \right)  \left\{ 1 - f(t,\bx,\bk) 
\left[ 1 + \exp \left( \dfrac{\eps(\bk) - \mu}{k_{B} \, T} \right) \right]
\right\} .
\label{psi_nu}
\end{equation}
Since $\nqn \geq 0$ and $\homq \geq 0$, then we can consider the function $\psi$, defined in \eq{def_psi}, only for $a \geq 0$, $b \geq 0$, and $\xi > 0$.
\begin{lemma}
\label{monot}
If $0 \leq f(t,\bx,\bk) \leq 1$, then the collision operator \eq{oper2_bgk}	is a strictly increasing function of $\mu$ on $\mathbb{R}$.
\end{lemma}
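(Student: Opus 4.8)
The plan is to exploit the fact that the operator \eq{oper2_bgk} is \emph{affine} in the value $f = f(t,\bx,\bk) \in [0,1]$. Using the definition \eq{cfBGK} of the collision frequency, I would first rewrite \eq{oper2_bgk} as
$$
\Phi_{0}(\bk,\mu) - f \, \kappa(\bk,\mu)\sv
$$
since $\kappa(\bk,\mu) = \Phi_{0}(\bk,\mu)\left[1 + \exp((\eps(\bk)-\mu)/(k_{B}\,T))\right]$ is exactly the factor multiplying $f$. For fixed $\bk$ and fixed $f$ this is a smooth function of $\mu$ (the cut-off $[\eps(\bk)-\homq]_{+}$ appearing in $\psi$ depends only on $\bk$, not on $\mu$, so no regularity is lost), and its $\mu$-derivative is
$$
\frac{\partial\Phi_{0}}{\partial\mu}(\bk,\mu) - f\,\frac{\partial\kappa}{\partial\mu}(\bk,\mu)\p
$$
This expression is again affine in $f$, hence on the interval $[0,1]$ it attains its minimum at one of the endpoints. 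Therefore strict monotonicity follows once I check that the derivative is strictly positive for the two extreme values $f=0$ and $f=1$.

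The case $f=0$ requires $\partial_{\mu}\Phi_{0} > 0$, while the case $f=1$ requires $\partial_{\mu}\Phi_{0} - \partial_{\mu}\kappa > 0$. A short computation with $\kappa = \Phi_{0}\,[1+\exp((\eps(\bk)-\mu)/(k_{B}\,T))]$ shows that the latter is equivalent to the single inequality
$$
k_{B}\,T\,\frac{\partial\Phi_{0}}{\partial\mu}(\bk,\mu) < \Phi_{0}(\bk,\mu)\p
$$
Both conditions I would verify directly from the structure of $\Phi_{0}$, which by \eq{def_psi} is a linear combination with positive constant coefficients of terms of the form $c\,[1 + \exp((c-\mu)/(k_{B}\,T))]^{-1}$ with $c \ge 0$ (namely $c = \eps(\bk)+\homq$ and $c = [\eps(\bk)-\homq]_{+}$, each weighted by $\nqn+1$ or $\nqn$). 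Writing $\mathcal{F}_{c}(\mu) = [1+\exp((c-\mu)/(k_{B}\,T))]^{-1}$, I would use the two elementary identities $k_{B}\,T\,\partial_{\mu}\mathcal{F}_{c} = \mathcal{F}_{c}(1-\mathcal{F}_{c})$ and, as a consequence, $\mathcal{F}_{c} - k_{B}\,T\,\partial_{\mu}\mathcal{F}_{c} = \mathcal{F}_{c}^{\,2}$.

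From $\partial_{\mu}\mathcal{F}_{c} = (k_{B}\,T)^{-1}\mathcal{F}_{c}(1-\mathcal{F}_{c}) > 0$ together with $0 < \mathcal{F}_{c} < 1$, the $f=0$ inequality is immediate, since each contribution with $c = \eps(\bk)+\homq > 0$ has strictly positive $\mu$-derivative. For the $f=1$ inequality, the identity $\mathcal{F}_{c} - k_{B}\,T\,\partial_{\mu}\mathcal{F}_{c} = \mathcal{F}_{c}^{\,2} > 0$ gives $\Phi_{0} - k_{B}\,T\,\partial_{\mu}\Phi_{0} = \sum c\,\mathcal{F}_{c}^{\,2}\cdot(\text{positive weights}) > 0$ termwise, which is precisely what is needed. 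I expect the main obstacle to be conceptual rather than computational: because the bracket $\{1 - f[1+\exp((\eps(\bk)-\mu)/(k_{B}\,T))]\}$ changes sign as $f$ crosses the Fermi-Dirac value $\fdk$, one cannot argue monotonicity by treating $\Phi_{0}$ and the bracket as two separately monotone positive factors. The affine-in-$f$ reduction to the endpoints is what circumvents this difficulty, and the genuinely substantive step is the second inequality $k_{B}\,T\,\partial_{\mu}\Phi_{0} < \Phi_{0}$, whose verification rests on the identity $\mathcal{F}_{c} - k_{B}\,T\,\partial_{\mu}\mathcal{F}_{c} = \mathcal{F}_{c}^{\,2}$.
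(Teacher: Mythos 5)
Your argument is correct, but it takes a genuinely different route from the paper's. The paper substitutes $\xi=\exp(-\mu/(k_B\,T))$ and, for each phonon mode, algebraically rearranges the term $\lambda(\bk,\xi;a,b)$ of \eq{def_lam} into a sum of two expressions of the form $C\left[\dfrac{N}{1+\xi\,\beta^{\pm1}\,w(\bk)}-N'\right]$ with $C,N\geq 0$ independent of $\xi$ (the numerators are $\beta^{-1}\varphi+1-\varphi$ and $\beta\varphi+1-\varphi$, both positive precisely because $0\leq\varphi\leq1$ and $\beta\geq1$); each such expression is manifestly decreasing in $\xi>0$, hence increasing in $\mu$, with no differentiation at all. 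You instead differentiate in $\mu$, exploit the affinity of $\Phi_0-f\kappa$ in $f$ to reduce the positivity of the derivative to the two endpoints $f=0$ and $f=1$, and verify the resulting inequalities $\partial_\mu\Phi_0>0$ and $k_B\,T\,\partial_\mu\Phi_0<\Phi_0$ via the logistic identity $k_B\,T\,\mathcal{F}_c'=\mathcal{F}_c(1-\mathcal{F}_c)$; your reduction of the $f=1$ case to $k_B\,T\,\partial_\mu\Phi_0<\Phi_0$ checks out, since $\partial_\mu(\Phi_0-\kappa)=\exp\!\left(\frac{\eps(\bk)-\mu}{k_B\,T}\right)\left[\Phi_0/(k_B\,T)-\partial_\mu\Phi_0\right]$. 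Both arguments rest on the same structural fact, that $\Phi_0$ is a positive combination of terms $c\,\mathcal{F}_c$ with $c\geq0$, and both need, for strictness, at least one mode with positive weight and $\eps(\bk)+\homq>0$ (supplied by the optical and $K$ phonons), which you flag and the paper leaves implicit. Your endpoint reduction is a clean way around the sign change of the bracket $\{1-f[1+\exp((\eps(\bk)-\mu)/(k_B\,T))]\}$, which you correctly identify as the obstruction to a naive product-of-monotone-factors argument; the paper's purely algebraic rearrangement avoids any appeal to differentiability and, more importantly, yields as a by-product the explicit bounds \eq{lam_sup}--\eq{lam_inf} on $\lambda$ that are reused verbatim in the proof of existence and uniqueness of $\mu$ in the mass-conservation equation, which your derivative-based argument would not provide.
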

\begin{proof}
Since Equation~\eq{oper2_bgk} is a linear combination, with constant positive coefficients, of the functions \eq{psi_nu}, it is sufficient to prove that every function \eq{psi_nu} is  strictly increasing on $\mathbb{R}$ with respect to the variable  $\mu$.
\\
Again to simplify the notation, we consider the general application
\begin{equation}
\lambda(\bk, \xi; a,b) = \psi(\bk, \xi; a,b)  \left\{ 1 - \varphi(\bk) 
\left[ 1 + \xi \exp \left( \dfrac{\eps(\bk)}{k_{B} \, T} \right) \right]
\right\} , \label{def_lam}
\end{equation}
where $0 \leq \varphi(\bk) \leq 1$ for all $\bk \in \setk$.
Equation \eq{def_lam} gives the expression \eq{psi_nu}, choosing the parameters $a$, $b$, $\xi$ appropriately, and $\varphi(\bk) = f(t,\bx,\bk)$, for fixed $t$ and $\bx$.
If we define
$$
w(\bk) = \exp \left( \dfrac{\eps(\bk)}{k_{B} \, T} \right)  
\quad \mbox{and} \quad
\beta = \exp \left( \dfrac{b}{k_{B} \, T} \right) ,
$$
then 
$$
\psi(\bk, \xi; a,b) =
\dfrac{ \left( a + 1 \right) \left( \eps(\bk) + b \right)}{1 + \xi \, \beta \, w(\bk)}
+ 
\dfrac{ a \left[ \eps(\bk) - b \right]_{+} }{1 + \xi  \, \beta^{-1} \, w(\bk)} \sv
$$
where we have taken into account the identity
$$
\dfrac{ a \left[ \eps(\bk) - b \right]_{+} }
{1 + \xi \exp \left( \dfrac{ \left[ \eps(\bk) - b \right]_{+} }{k_{B} \, T} \right)}
\equiv
\dfrac{ a \left[ \eps(\bk) - b \right]_{+} }
{1 + \xi \exp \left( \dfrac{ \eps(\bk) - b }{k_{B} \, T} \right)} .
$$
Hence we can write
\begin{eqnarray*}
\lambda(\bk, \xi; a,b) & = & 
\left[
\dfrac{ \left( a + 1 \right) \left( \eps(\bk) + b \right)}{1 + \xi \, \beta \, w(\bk)}
+ \dfrac{ a \left[ \eps(\bk) - b \right]_{+} }{1 + \xi  \, \beta^{-1} \, w(\bk)}
\right] 
\left[ 1 - \varphi(\bk) - \xi \, w(\bk) \, \varphi(\bk) \right]
\\
& = &
\left( a + 1 \right) \left( \eps(\bk) + b \right)
\left[ \dfrac{\beta^{-1} \, \varphi(\bk) + 1 - \varphi(\bk)}{1 + \xi \, \beta \, w(\bk)}
- \beta^{-1} \, \varphi(\bk) \right] 
\\
&&
\mbox{} +
a \left[ \eps(\bk) - b \right]_{+} 
\left[
\dfrac{\beta \, \varphi(\bk) + 1 - \varphi(\bk)}{1 + \xi  \, \beta^{-1} \, w(\bk)}
- \beta \, \varphi(\bk)
\right] , 
\end{eqnarray*}
which is strictly decreasing with respect to the positive variable $\xi$, because 
$0 \leq \varphi(\bk) \leq 1$. \\
Now the result follows for $\dm \xi = \exp \left( \dfrac{ - \, \mu}{k_{B} \, T} \right)$.
\end{proof}
\fr
It is useful, for the following, to establish some inequalities. 
From
$$
\beta^{-1} \leq \beta^{-1} \, \varphi(\bk) + 1 - \varphi(\bk) \leq 1
\quad \mbox{and} \quad
1 \leq \beta \, \varphi(\bk) + 1 - \varphi(\bk) \leq \beta \sv
$$
we obtain
\begin{align}
\lambda(\bk, \xi; a,b) \leq & 
\left( a + 1 \right) \left( \eps(\bk) + b \right)
\left[ \dfrac{1}{1 + \xi \, \beta \, w(\bk)} - \beta^{-1} \, \varphi(\bk) \right] 
\! +
a \left[ \eps(\bk) - b \right]_{+} 
\left[
\dfrac{\beta}{1 + \xi  \, \beta^{-1} \, w(\bk)} - \beta \, \varphi(\bk) \right] 
\nonumber
\\[3pt]
\leq & 
\left( a + 1 \right) \left( \eps(\bk) + b \right)
\left[ \dfrac{1}{\xi \, \beta \, w(\bk)} - \beta^{-1} \, \varphi(\bk) \right] 
+
a \left[ \eps(\bk) - b \right]_{+} 
\left[
\dfrac{\beta}{\xi  \, \beta^{-1} \, w(\bk)} - \beta \, \varphi(\bk) \right] ,
\label{lam_sup}
\end{align}
and
\begin{align}
\lambda(\bk, \xi; a,b) \geq & 
\left( a + 1 \right) \left( \eps(\bk) + b \right)
\left[ \dfrac{\beta^{-1}}{1 + \xi \, \beta \, w(\bk)} - \beta^{-1} \, \varphi(\bk) \right] 
\! +
a \left[ \eps(\bk) - b \right]_{+} 
\left[
\dfrac{1}{1 + \xi  \, \beta^{-1} \, w(\bk)} - \beta \, \varphi(\bk) \right] \nonumber
\\[3pt]
\geq & 
\beta^{-1} \,  \dfrac{\left( a + 1 \right) \left( \eps(\bk) + b \right)}
{1 + \xi \, \beta \, w(\bk)}  
-
\left[ \dfrac{\left( a + 1 \right) \left( \eps(\bk) + b \right)}{\beta}  +
\beta \, a \left[ \eps(\bk) - b \right]_{+} \right] \varphi(\bk) \p 
\label{lam_inf}
\end{align}
We remark that the variable $\xi$ has not been involved in inequalities.
\section{The mass conservation}
Equation~\eq{mass_cons}, that guarantees the conservation of mass, can be written as
$$
\int_{\setk} \Phi_{0}(\bk,\mu) \left\{ 1 - f(t,\bx,\bk) 
\left[ 1 + \exp \left( \dfrac{\eps(\bk) - \mu}{k_{B} \, T} \right) \right]
\right\} d \bk = 0 \sv
$$
i.e.
\begin{equation}
\sum_{\nu}
\left[ \int_{0}^{2 \, \pi} {\cal G}^{(\nu)}(\cos \vartheta' ) \:  d \vartheta' \right]
\int_{\setk}
\left[ \lambda \! \left(\bk, \exp \left( \dfrac{ - \, \mu}{k_{B} \, T} \right); \nqn, \homq \right) \right]_{\varphi(\bk) = f(t,\bx,\bk)} d \bk = 0 \p
\label{eq_mu}
\end{equation}
\begin{theorem}
If $ 0 \leq f(t,\bx,\bk) \leq 1$ and $\eps(\bk) \, f(t,\bx,\bk)$ is integrable with respect to $\bk$ over $\setk$, for all $t$ and $\bx$, then there exists a unique $\mu$ satisfying Equation~\eq{eq_mu}.
\end{theorem}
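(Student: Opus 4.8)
The plan is to read the left-hand side of \eq{eq_mu} as a single scalar function of the unknown, say $F(\mu)$, and to show that $F$ is continuous and strictly increasing on $\mathbb{R}$, with $F(\mu)<0$ for $\mu$ sufficiently negative and $F(\mu)>0$ for $\mu$ sufficiently positive. Existence of a root then follows from the intermediate value theorem, and uniqueness from strict monotonicity. Since $F$ is a finite linear combination, with the strictly positive coefficients $\int_{0}^{2\pi}{\cal G}^{(\nu)}(\cos\vartheta')\,d\vartheta'$, of the integrals $\int_{\setk}\lambda(\bk,\xi;\nqn,\homq)\big|_{\varphi=f}\,d\bk$ evaluated at $\xi=\exp(-\mu/k_BT)$, it is enough to establish continuity, monotonicity and the two sign conditions term by term.

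Strict monotonicity is immediate from Lemma~\ref{monot}: for every fixed $\bk\neq\mathbf 0$ the integrand $\lambda$ is strictly increasing in $\mu$, and integrating over $\setk$ preserves the strict inequality (the single point $\bk=\mathbf 0$ is negligible). For finiteness and continuity I would use the decomposition of $\lambda$ already written out in the proof of Lemma~\ref{monot}: the two Fermi--Dirac fractions decay exponentially in $\eps(\bk)$, hence are integrable, while the remaining contributions are bounded in modulus by a constant multiple of $\eps(\bk)\,f(\bk)$, which is integrable by hypothesis. This shows $F(\mu)$ is finite for every finite $\mu$. Restricting $\mu$ to a compact neighbourhood of a fixed $\mu_0$ keeps $\xi$ in an interval $[\xi_1,\xi_2]\subset(0,\infty)$, on which these bounds are uniform; dominated convergence then gives continuity of $F$ at $\mu_0$.

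The core of the argument, and the step I expect to be the main obstacle, is the asymptotic behaviour of $F$, for which the inequalities \eq{lam_sup} and \eq{lam_inf} are precisely the right tools. As $\mu\to-\infty$ we have $\xi\to+\infty$; in the upper bound \eq{lam_sup} both fractional terms vanish as $\xi\to\infty$, while the remaining $\varphi$-terms are independent of $\mu$ and dominated by $\eps f$, so dominated convergence gives $\limsup_{\mu\to-\infty}F(\mu)\le -\,c\int_{\setk}\eps(\bk)f(\bk)\,d\bk$ up to positive constants, which is strictly negative whenever $f$ does not vanish almost everywhere. As $\mu\to+\infty$ we have $\xi\to0^{+}$; in the lower bound \eq{lam_inf} the surviving fraction $\beta^{-1}(\nqn+1)(\eps(\bk)+\homq)\big/\!\left(1+\xi\beta\,w(\bk)\right)$ increases monotonically to $\beta^{-1}(\nqn+1)(\eps(\bk)+\homq)$, whose integral over $\setk$ diverges because it grows linearly in $\eps$ against the two-dimensional measure, while the subtracted $\varphi$-term remains finite. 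Monotone convergence therefore forces $F(\mu)\to+\infty$. The delicate part here is verifying the dominating functions and the monotone/dominated convergence hypotheses, and keeping the exponential decay of the fractions separate from the only-integrable $\eps f$ contribution.

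Putting the pieces together, $F$ is continuous, strictly increasing, negative for $\mu\ll0$ and positive for $\mu\gg0$, hence possesses exactly one zero, which is the required $\mu$. The one point to flag is the degenerate case $f\equiv0$ (where $F$ stays positive and the root is pushed to $\mu\to-\infty$): this should be excluded explicitly, or noted as the limiting case, since the negativity of $F$ for $\mu\ll0$ genuinely needs $f$ not to be almost everywhere zero. Everything else reduces to the routine estimates indicated above.
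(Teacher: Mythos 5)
Your proposal is correct and follows essentially the same route as the paper: uniqueness from Lemma~\ref{monot}, exclusion of the case $f=0$ almost everywhere, the upper bound \eq{lam_sup} to get a strictly negative limit as $\mu\to-\infty$, and the lower bound \eq{lam_inf} to force divergence to $+\infty$ as $\mu\to+\infty$. The only cosmetic difference is that the paper establishes the divergence by an explicit polar-coordinate estimate (bounding the surviving integral below by a cubic polynomial in $\mu$) rather than by monotone convergence, and it leaves the continuity of $F$ implicit where you spell it out.
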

\begin{proof}
If there exists a solution $\mu$ of Equation~\eq{eq_mu}, then it must be unique due to the Lemma \eq{monot}.
To prove the theorem, we show that every integrals of Equation~\eq{eq_mu} is negative for 
$\mu \rightarrow - \infty$, and positive for $\mu \rightarrow + \infty$.
We do not consider the meaningless case $f(t,\bx,\bk) = 0$ almost everywhere.
Taking into account the inequality \eq{lam_sup}, we have
\begin{align*}
& \int_{\setk}
\left[ \lambda(\bk, \xi; a, b) \right]_{\varphi(\bk) = f(t,\bx,\bk)} d \bk 
\\
& \mbox{} \leq  \int_{\setk} \left\{
\left( a + 1 \right) \left( \eps(\bk) + b \right)
\left[ \dfrac{1}{\xi \, \beta \, w(\bk)} - \beta^{-1} \, f(t,\bx,\bk) \right] 
+
a \left[ \eps(\bk) - b \right]_{+} 
\left[ \dfrac{\beta}{\xi  \, \beta^{-1} \, w(\bk)} - \beta \, f(t,\bx,\bk) \right]
\! \right\} d \bk
\\[5pt]
& \mbox{} =
\dfrac{1}{\xi}  \int_{\setk}
\left[ \dfrac{\left( a + 1 \right) \left( \eps(\bk) + b \right)}{\beta \, w(\bk)} +
\dfrac{a \left[ \eps(\bk) - b \right]_{+} \beta}{\beta^{-1} \, w(\bk)} \right] d \bk
\\[5pt]
& \mbox{} \quad  - \int_{\setk}
\left[ \left( a + 1 \right) \left( \eps(\bk) + b \right) \beta^{-1} +
a \left[ \eps(\bk) - b \right]_{+} \beta \right] f(t,\bx,\bk) \: d \bk \sv
\end{align*}
where the last integral is finite due to hypotheses of the theorem.
If $\mu \rightarrow - \infty$, then $\xi \rightarrow + \infty$, and 
\begin{align*}
&
\lim_{\mu \rightarrow - \infty} \int_{\setk}
\left[ \lambda(\bk, \xi; a, b) \right]_{\varphi(\bk) = f(t,\bx,\bk)} d \bk 
\\
& \mbox{}
=  - \int_{\setk}
\left[ \left( a + 1 \right) \left( \eps(\bk) + b \right) \beta^{-1} +
a \left[ \eps(\bk) - b \right]_{+} \beta \right] f(t,\bx,\bk) \: d \bk < 0 \p
\end{align*}
This implies that the l.h.s. of Equation~\eq{eq_mu} is negative for 
$\mu \rightarrow - \infty$. \\
Now we consider the inequality \eq{lam_inf}. We have
\begin{eqnarray*}
&& \int_{\setk}
\left[ \lambda(\bk, \xi; a, b) \right]_{\varphi(\bk) = f(t,\bx,\bk)} d \bk 
\\
&& \mbox{} \geq
\beta^{-1} \left( a + 1 \right) 
\int_{\setk} \dfrac{ \eps(\bk) + b}{1 + \xi \, \beta \, w(\bk)} \: d \bk 
- \int_{\setk}
\left[ \dfrac{\left( a + 1 \right) \left( \eps(\bk) + b \right)}{\beta}  +
\beta \, a \left[ \eps(\bk) - b \right]_{+} \right] f(t,\bx,\bk) \: d \bk \sv
\end{eqnarray*}
where both integrals exist.
Since, for $\mu > 0$, we have
\begin{eqnarray*}
&&
\int_{\setk} \dfrac{ \eps(\bk) + b}{1 + \xi \, \beta \, w(\bk)} \: d \bk 
= 
2 \, \pi \int_{0}^{+ \infty} \dfrac{\hbar \, v_F \, r + b}
{1 + \beta \, \exp \left( \dfrac{\hbar \, v_F \, r - \mu}{k_{B} \, T} \right)} \, r \: dr
\\
&& \mbox{} \quad =
\dfrac{2 \, \pi}{(\hbar \, v_F)^{2}} \int_{0}^{+ \infty} \dfrac{s + b}
{1 + \beta \, \exp \left( \dfrac{s - \mu}{k_{B} \, T} \right)} \, s \: ds
\geq
\dfrac{2 \, \pi}{(\hbar \, v_F)^{2}} \int_{0}^{\mu} \dfrac{s + b}
{1 + \beta \, \exp \left( \dfrac{s - \mu}{k_{B} \, T} \right)} \, s \: ds
\\
&& \mbox{} \quad \geq
\dfrac{2 \, \pi}{(\hbar \, v_F)^{2}} \int_{0}^{\mu} \dfrac{s + b}{1 + \beta} \, s \: ds
=
\dfrac{2 \, \pi}{(\hbar \, v_F)^{2} \left( 1 + \beta \right)}
\left( \dfrac{1}{3} \, \mu^{3} + \dfrac{1}{2} \, b \, \mu^{2} \right) .
\end{eqnarray*}
Therefore
$$
\lim_{\mu \rightarrow + \infty} \int_{\setk}
\left[ \lambda(\bk, \xi; a, b) \right]_{\varphi(\bk) = f(t,\bx,\bk)} d \bk 
=  + \infty \p
$$
This implies that the l.h.s. of Equation~\eq{eq_mu} is positive for 
$\mu \rightarrow + \infty$, and it concludes the proof. 
\end{proof}
This theorem establishes that the kinetic model \eq{BGKeq}-\eq{mass_cons} is correctly-set.
\section{Conclusions}
In this paper we propose a BGK model for charge transport in graphene. The collision operator of this model replaces the true non linear Boltzmann integral operator.
A further equation, for a new unknown, is added to the kinetic equation in order to guarantee the mass conservation. 
The model would furnish good results when the electric field is not strong, so that the distribution function remains near an equilibrium Fermi-Dirac distribution.
Moreover the simple collision term allows analytical investigations, which may be prohibitive for the Boltzmann equation.
A numerical scheme, based on a discontinuous Galerkin method, for finding approximate solutions to the model, does not seem more simple than the scheme used for solving the Boltzmann equation in Ref.~\cite{CoMaRo}, due to the non-linearity of the integral equation for the mass conservation.

%
%
%
\section*{Acknowledgements}
The author acknowledges the financial support provided by the project
"Modellistica, simulazione e ottimizzazione del trasporto di cariche in strutture
a bassa dimensionalità", (2016-2018) University of Catania. 
\end{document}